\newtheorem{theorem}{Theorem}
\newtheorem{lemma}{Lemma}
\newtheorem{proposition}{Proposition}
\newtheorem{corollary}{Corollary}
\newtheorem{remark}{Remark}
\newenvironment{proof}{{\indent \indent \it Proof:}}{\hfill $\blacksquare$\par}
\begin{document}
\title{\huge Secure Communication for Spatially Correlated RIS-Aided Multiuser Massive MIMO Systems: Analysis and Optimization}

\author{
Dan Yang, Jindan Xu, Wei Xu, \IEEEmembership{Senior Member, IEEE,} Yongming Huang, \IEEEmembership{Senior Member, IEEE,} and Zhaohua Lu
\thanks{D. Yang, W. Xu, and Y. Huang are with the National Mobile Communications Research Laboratory, Southeast University, Nanjing 210096, China (email: {dyang@seu.edu.cn; wxu@seu.edu.cn; huangym@seu.edu.cn}). 

J. Xu is with the Engineering Product Development Pillar, Singapore University of Technology and Design, Singapore 487372 (e-mail: jindan xu@sutd.edu.sg).

Z. Lu is with ZTE Corporation, and State Key Laboratory of Mobile Network and Mobile Multimedia Technology, Shenzhen, China (email: {lu.zhaohua@zte.com.cn})
}
}
\maketitle

\begin{abstract}
This letter investigates the secure communication in a reconfigurable intelligent surface (RIS)-aided multiuser massive multiple-input multiple-output (MIMO) system exploiting artificial noise (AN). We first derive a closed-form expression of the ergodic secrecy rate under spatially correlated MIMO channels. By using this derived result, we further optimize the power fraction of AN in closed form and the RIS phase shifts by developing a gradient-based algorithm, which requires only statistical channel state information (CSI). Our analysis shows that spatial correlation at the RIS provides an additional dimension for optimizing the RIS phase shifts. Numerical simulations validate the analytical results which show the insightful interplay among the system parameters and the degradation of secrecy performance due to high spatial correlation at the RIS.
\end{abstract}

\begin{IEEEkeywords}
Reconfigurable intelligent surface (RIS), ergodic secrecy rate, spatial correlation, joint optimization.
\end{IEEEkeywords}
\setlength{\parskip}{0\baselineskip}

\section{Introduction}
\IEEEPARstart{R}{econfigurable} intelligent surface (RIS) has been proposed as a promising technology for improving both spectral and energy efficiencies for next-generation wireless networks [1]. It consists of numerous low-cost passive reflecting elements that can induce phase changes to the reflected electromagnetic waves. As such, by properly adjusting the phase shifts, RIS can smartly modify the channel conditions between the base station (BS) and the users, which helps improve the communication quality of wireless networks [2].

Recently, there has been considerable interest in the use of RIS to enhance the physical layer security of wireless communication networks [3]--[6]. In [3], the transmit beamforming jointly with artificial noise (AN) and the RIS phase shifts was optimized under a multiple-input multiple-output (MIMO) wiretap channel. It was shown that the secrecy rate performance can be strengthened with the aid of AN in RIS-assisted systems. As for multiuser scenarios, the authors of [4] investigated the robust secrecy design by solving a transmit power minimization problem. Besides, in [5], the weighted sum secrecy rate was maximized by taking both the direct link and the cascaded RIS link into account. In [6], a RIS-aided secure multiple-input single-output (MISO) communication system was studied, where multiple colluding Eves coexist. 

However, the design of RIS phase shifts of most methods was based on instantaneous channel state information (CSI) in each interval, which can be unaffordable due to frequent phase adjustment at RIS and channel estimation at the BS. Secondly, all the above works were based on independent Rayleigh or Rician fading and have not considered the impact of spatial correlation on the secrecy performance. In fact, spatial correlation generally exists at the RIS due to physical constraints in a rectangular array, and it varies by adjusting the space among adjacent RIS elements [7], [8]. Moreover, previous works on RIS-aided systems, e.g., [5], [6], are usually restricted to single-antenna Eves scenarios for the sake of analytical tractability.

Against the above background, the performance of spatially correlated RIS-aided multiuser massive MIMO systems is first studied in the presence of a multi-antenna Eve. The main contributions of this work are listed below.
\begin{itemize}
	\item We derive the closed-form expression for the ergodic secrecy rate, which depends only on statistical channel information of the users and Eve.
	\item We devise an alternating algorithm to maximize the ergodic secrecy rate, where the power fraction of AN is optimally obtained in closed form and the RIS phase shifts are designed by a projected gradient ascent method.
	\item Insightful observations of the impact of spatial correlation and the number of RIS elements on the secrecy performance are presented. It indicates that the spatial correlation enhances the ability of RIS to properly manipulate the wireless environment.
\end{itemize}

\emph{Notation}: The inverse, conjugate transpose, and trace of matrix $\bf A$ are denoted by ${\bf A}^{-1}$, ${\bf A}^H$, ${\rm tr}({\bf A})$, respectively. $\mathcal {CN}(0,\sigma^2)$ represents the complex Gaussian distribution with zero mean and variance $\sigma^2$. Besides, $\mathbb E\{\cdot\}$ and $\rm var\{\cdot\}$ denote the expectation and variance of a random variable, respectively. We use $\mathbb C^{m\times n}$ to represent the space of all $m\times n$ matrices with complex-valued elements. ${\bf I}_K$ denotes the $K$-dimensional identity matrix.

\section{System Model}
We consider a RIS-aided multiuser massive MIMO secure system, where $K$ single-antenna legitimate users are served by an $M$-antenna BS leveraging a RIS of $N$ reflecting elements. One passive Eve equipped with $M_{\rm E}$ antennas is located around users seeking to wiretap the transmitted information{\footnote{Note that this assumption also applies to situations where there are several Eves collusively eavesdropping on the same secrecy data [9].}}.
Assume that the RIS is controlled by the BS through a perfect backhaul link and perfect CSI of the users are available as it plays the role of an upper bound with imperfect CSI in practice. To evaluate the secrecy performance, channel distribution information of Eve is assumed to be available at the BS, which has been widely adopted and validated in literature, e.g., [9], [10].

We assume narrow-band quasi-static fading channels. Let ${\bf H}_1\in\mathbb C^{M\times N}$, ${\bf h}_{{\rm B},k}\in\mathbb C^{M\times 1}$, ${\bf H}_{\rm B,E}\in\mathbb C^{M\times M_{\rm E}}$, ${\bf h}_{{\rm I},k}\in\mathbb C^{N\times 1}$, and ${\bf H}_{\rm I,E}\in\mathbb C^{N\times M_{\rm E}}$, respectively, denote the channel from the BS to RIS, BS to user $k$, BS to Eve, RIS to user $k$, and RIS to Eve. Notably, we consider spatially correlated rather than independent Rayleigh fading. Hence, we have
{\abovedisplayskip=4pt plus 4pt minus 6pt
\abovedisplayshortskip=0pt plus 4pt
\belowdisplayskip=4pt plus 4pt minus 6pt
\belowdisplayshortskip=0pt plus 4pt minus 4pt
\begin{align}
	{\bf h}_{{\rm I},k}=\sqrt{\beta_{{\rm I},k}}\mathbf{R}_{{\rm I},k}^{1/2}{\bf g}_{{\rm I},k}\quad \ \ \ {\bf h}_{{\rm B},k}=\sqrt{\beta_{2,k}}\mathbf{R}_{{\rm B},k}^{1/2}{\bf g}_{{\rm B},k},\\
	{\bf H}_{\rm I,E}=\sqrt{\beta_{\rm I,E}}\mathbf{R}_{\rm I,E}^{1/2}\mathbf{G}_{\rm I,E}\quad \ {\bf H}_{\rm B,E}=\sqrt{\beta_3}\mathbf{R}_{\rm B,E}^{1/2}\mathbf{G}_{\rm B,E}
\end{align}}where $\beta_{{\rm I},k}$, $\beta_{\rm I,E}$, $\beta_3$ and $\beta_{2,k}$ represent the large-scale path losses of the corresponding channels. Elements of $\mathbf{g}_{{\rm I},k}$, $\mathbf{g}_{{\rm B},k}$, $\mathbf{G}_{\rm I,E}$, and $\mathbf{G}_{\rm B,E}$ are independently and identically distributed (i.i.d.) complex Gaussian random variables with zero mean and unit variance. In addition, $\mathbf{R}_{{\rm I},k}$ and $\mathbf{R}_{{\rm B},k}$ ($\mathbf{R}_{\rm I,E}$ and $\mathbf{R}_{\rm B,E}$) are respectively the channel correlation matrices at the RIS and BS. Moreover, the LoS channel ${\bf H}_1$ is modeled, similar to [11], as a full-rank channel matrix with $[{\bf H}_1]_{m,n}=\sqrt{\beta_1}e^{-j2\pi\frac{d_{m,n}}{\lambda}}$, where $\beta_1$ is the path loss, $\lambda$ is the carrier wavelength, and $d_{m,n}$ is the distance between reflecting element $m$ of the RIS and antenna $n$ of the BS. Such channels are popularly seen when deterministic scattering presents between the BS and RIS or placing the RIS close to the BS [11]. Note that the correlation matrices and the path losses are assumed to be known, e.g., by the methods in [12]. In addition, denote the phase shift matrix of the RIS by a diagonal matrix ${\bf \Phi}={\rm diag}(e^{j\theta_1},...,e^{j\theta_n},...,e^{j\theta_N})$, where $\theta_n\in[0, 2\pi)$ represents the phase shift of the $n$th RIS reflecting element\footnote{As usual, we use the amplitude-independent phase shift model for tractable analysis. The analysis based on practical amplitude models [13] will be left for our future work.}.

In case that the instantaneous CSI of Eve is completely unknown, AN is injected to mask the confidential information. Before transmission, the information signal ${\bf s}$ with $\mathbb E\{{\bf s}{\bf s}^H\}={\bf I}_K$ and the AN signal ${\bf z}\sim\mathcal{CN}({\bf 0}_{M-K},{\bf I}_{M-K})$ are multiplied by data precoder ${\bf W}\in\mathbb C^{M\times K}$ with ${\rm tr}({\bf W}{\bf W}^H)=K$ and AN precoder ${\bf V}\in\mathbb C^{M\times (M-K)}$ with ${\rm tr}({\bf V}{\bf V}^H)=M-K$, respectively. The transmit signal is expressed as
{\abovedisplayskip=4pt plus 2pt minus 6pt
\abovedisplayshortskip=0pt plus 2pt
\belowdisplayskip=4pt plus 2pt minus 6pt
\belowdisplayshortskip=2pt plus 2pt minus 4pt
\begin{align}
	{\bf x}=\sqrt{\frac{\xi P}{K}}{\bf Ws}+\sqrt{\frac{(1-\xi) P}{M-K}}{\bf Vz}\triangleq\sqrt{p}{\bf Ws}+\sqrt{q}{\bf Vz},
\end{align}}where $P$ denotes the total transmit power and $\xi\in[0,1]$ is the fraction of power allocated to the information ({\emph {power fraction}}, for short). Based on the above definitions, the transmit signal in (3) satisfies the power constraint $\mathbb E\{{\bf x}^H{\bf x}\}=P$. For notational simplicity, we define $p\triangleq\frac{\xi P}{K}$ and $q\triangleq\frac{(1-\xi) P}{M-K}$. Then, the received signals at user $k$ and Eve are respectively given by
{\abovedisplayskip=4pt plus 2pt minus 4pt
\abovedisplayshortskip=0pt plus 2pt
\belowdisplayskip=4pt plus 2pt minus 6pt
\belowdisplayshortskip=4pt plus 2pt minus 4pt
\begin{equation}
	y_k=\sqrt{p}{\bf h}_k^H{\bf w}_k s_k+\sqrt{p}\sum\nolimits_{i\neq k}{\bf h}_k^H{\bf w}_i s_i+\sqrt{q}{\bf h}_k^H{\bf Vz}+n_k,
\end{equation}
\begin{equation}
	{\bf y}_{\rm E}=\sqrt{p}{\bf H}_{\rm E}^H{\bf Ws}+\sqrt{q}{\bf H}_{\rm E}^H{\bf Vz}+{\bf n}_{\rm E},
\end{equation}}where $n_k\sim\mathcal{CN}(0,\sigma_k^2)$ and $\mathbf{n}_{\rm E}\sim\mathcal{CN}({\bf 0},\sigma_{\rm E}^2{\bf I}_{M_{\rm E}})$ are the additive white Gaussian noise (AWGN) at user $k$ and Eve, while ${\bf h}_k={\bf H}_1{\bf \Phi}{\bf h}_{{\rm I},k}+{\bf h}_{{\rm B},k}$ and ${\bf H}_{\rm E}={\bf H}_1{\bf \Phi}{\bf H}_{\rm I,E}+{\bf H}_{\rm B,E}$ represents the equivalent channel from the BS to user $k$ and to Eve, respectively. 
	
\section{Secrecy Performance Analysis}
In this section, the ergodic secrecy rate of the RIS-aided secure system is derived in closed form.

We take advantage of channel hardening, because users do not have any knowledge of the instantaneous CSI in practice, but they are aware of their statistics. Therefore, the received signal is decomposed as
{\abovedisplayskip=4pt plus 4pt minus 6pt
\abovedisplayshortskip=0pt plus 4pt
\belowdisplayskip=4pt plus 4pt minus 6pt
\belowdisplayshortskip=0pt plus 4pt minus 4pt
\begin{align}
y_k=&\mathbb E\{\sqrt{p}{\bf h}_k^H{\bf w}_k\} s_k+\left(\sqrt{p}{\bf h}_k^H{\bf w}_k-\mathbb E\{\sqrt{p}{\bf h}_k^H{\bf w}_k\}\right)s_k\nonumber\\
&+\sqrt{p}\sum\nolimits_{i\neq k}{\bf h}_k^H{\bf w}_i s_i+\sqrt{q}{\bf h}_k^H{\bf Vz}+n_k.
\end {align}}By treating the interference and channel uncertainty as an equivalent noise term, a lower bound for the achievable rate of user $k$ is given by
{\abovedisplayskip=2pt plus 2pt minus 6pt
\abovedisplayshortskip=0pt plus 2pt
\belowdisplayskip=2pt plus 2pt minus 6pt
\belowdisplayshortskip=0pt plus 2pt minus 4pt
\begin{equation}
R_k=\log_2\bigg(1+\frac{\left|\mathbb E\{\sqrt{p}{\bf h}_k^H{\bf w}_k\}\right|^2}{\Psi}\bigg),
\end{equation}}where $\Psi=\sum\nolimits_{i\neq k}\mathbb E\{|\sqrt{p}{\bf h}_k^H{\bf w}_i|^2\}+\mathbb E\{q{\bf h}_k^H{\bf V}{\bf V}^H{\bf h}_k\}+{\rm var}\{\sqrt{p}{\bf h}_k^H{\bf w}_k\}+\sigma_k^2$. For analytical tractability and low complexity, we adopt the MRT precoding [9], and ${\bf V}=[{\bf v}_1,...,{\bf v}_i,...,{\bf v}_{M-K}]$ with $\Vert{\bf v}_i\Vert=1, i=1,...,M-K$, is designed to lie in the null space of the user channels, i.e., ${\bf H}^H{\bf V} ={\bf0}$, where ${\bf H}=[{\bf h}_1,...,{\bf h}_K]$.

Considering a pessimistic case, Eve is so powerful that it is perfectly aware of its channel and can remove all the interference from legitimate users, i.e., strongly eavesdropping in [10], [14]. Then, from (5), an upper bound for the capacity of Eve is obtained as
{\abovedisplayskip=4pt plus 2pt minus 6pt
\abovedisplayshortskip=0pt plus 2pt
\belowdisplayskip=4pt plus 2pt minus 6pt
\belowdisplayshortskip=2pt plus 2pt minus 4pt
\begin{equation}
	C=\mathbb E\big\{\log_2\big(1+p{\bf w}_k^H{\bf H}_{\rm E}{\bf X}^{-1}{\bf H}_{\rm E}^H{\bf w}_k\big)\big\},
\end{equation}}where ${\bf X}\triangleq q{\bf H}_{\rm E}^H{\bf V}{\bf V}^H{\bf H}_{\rm E}$ denotes the noise correlation matrix at Eve. In addition, since the noise level at Eve is unknown, it is reasonable to assume negligible thermal noise, i.e., $\sigma_{\rm E}^2\rightarrow 0$, where the secure communication is guaranteed in the worst case of a powerful Eve. To this end, the ergodic secrecy rate is given by
{\abovedisplayskip=4pt plus 2pt minus 6pt
\abovedisplayshortskip=0pt plus 2pt
\belowdisplayskip=4pt plus 2pt minus 6pt
\belowdisplayshortskip=2pt plus 2pt minus 4pt
\begin{equation}
		R_{\rm sec}=[R_k-C]^+,
\end{equation}}where $[x]^+={\rm max}\{0, x\}$. However, evaluating the expected value in (8) analytically is cumbersome. As an alternative, a lower bound for the ergodic secrecy rate is presented in the following theorem.

\begin{theorem}
	 In the RIS-aided secure system with AN, the ergodic secrecy rate of user $k$ can be evaluated by
	 {\abovedisplayskip=4pt plus 2pt minus 6pt
\abovedisplayshortskip=0pt plus 2pt
\belowdisplayskip=4pt plus 2pt minus 6pt
\belowdisplayshortskip=4pt plus 2pt minus 4pt
	\begin{equation}
		R_{\rm sec}=[\log_2(1+\gamma_k)-\log_2(1+\gamma_{\rm E})]^{+},
	\end{equation}
	with
	\begin{small}
	\begin{align}
		&\gamma_k=S_k/I_k,\ \gamma_{\rm E}=S_{\rm E}/I_{\rm E},\ S_k=\xi P\big[{\rm tr}(\mathbf{R}_k)\big]^2,\\
		&I_k=\xi P\sum\nolimits_{i\neq k}{\rm tr}(\mathbf{R}_k\mathbf{R}_i)+\sigma_k^2\sum\nolimits_{j=1}^{K}{\rm tr}({\bf R}_j),\\
		&S_{\rm E}=\xi MM_{\rm E}(M-K){\rm tr}\big({\bf R}_k({\bf R}_{\rm E}+\beta_3{\bf R}_{\rm B,E})\big),\\
		&I_{\rm E}=(1-\xi)(M-K-M_{\rm E}){\rm tr}({\bf R}_{\rm E}+\beta_3{\bf R}_{\rm B,E})\sum\limits_{j=1}^{K}{\rm tr}({\bf R}_j),
	\end{align}
	\end{small}}where ${\bf R}_k=\beta_{2,k}\mathbf{R}_{{\rm B},k}+\beta_{{\rm I},k}{\bf H}_1{\bf \Phi}\mathbf{R}_{{\rm I},k}{\bf \Phi}^H{\bf H}_1^H$ and $\mathbf{R}_{\rm E}=\beta_{\rm I, E}{\bf H}_1{\bf \Phi}\mathbf{R}_{\rm I, E}{\bf \Phi}^H{\bf H}_1^H$.
\end{theorem}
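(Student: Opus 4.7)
The plan is to derive a lower bound on the user rate $R_k$ in (7) and an upper bound on the eavesdropper's capacity $C$ in (8) separately, both depending only on the channel statistics, and then combine them via $R_{\rm sec} = [R_k - C]^+$. The enabling observations are: $\mathbf{h}_k = \mathbf{H}_1{\bf \Phi}\mathbf{h}_{{\rm I},k} + \mathbf{h}_{{\rm B},k} \sim \mathcal{CN}(\mathbf{0}, \mathbf{R}_k)$ with $\mathbf{R}_k$ as stated; the $\mathbf{h}_k$ are mutually independent across users; the columns of $\mathbf{H}_{\rm E}$ are i.i.d.\ $\mathcal{CN}(\mathbf{0}, \mathbf{R}_{\rm E}+\beta_3\mathbf{R}_{\rm B,E})$; and $\mathbf{H}_{\rm E}$ is independent of the user channels, hence of the MRT precoder $\mathbf{w}_k\propto\mathbf{h}_k$ and the AN precoder $\mathbf{V}$.

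For $R_k$, I would evaluate the four statistical terms in (7) in closed form. Since $\mathbf{w}_k\propto\mathbf{h}_k$, the signal term $\mathbf{h}_k^H\mathbf{w}_k$ is a scalar multiple of $\|\mathbf{h}_k\|^2$, whose mean $\text{tr}(\mathbf{R}_k)$ and variance $\text{tr}(\mathbf{R}_k^2)$ together generate the $[\text{tr}(\mathbf{R}_k)]^2$ factor in $S_k$. The interference term $\mathbb{E}\{|\sqrt{p}\mathbf{h}_k^H\mathbf{w}_i|^2\}$ for $i\neq k$ reduces, via independence of $\mathbf{h}_k$ and $\mathbf{h}_i$ and the standard identity $\mathbb{E}\{|\mathbf{h}_k^H\mathbf{h}_i|^2\}=\text{tr}(\mathbf{R}_k\mathbf{R}_i)$, to the trace-products appearing in $I_k$. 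The AN term $\mathbb{E}\{q\mathbf{h}_k^H\mathbf{V}\mathbf{V}^H\mathbf{h}_k\}$ vanishes identically because $\mathbf{V}^H\mathbf{h}_k=\mathbf{0}$ by construction. Collecting these with the MRT normalization yields $\gamma_k=S_k/I_k$.

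For $C$, Jensen's inequality (concavity of $\log_2(1+\cdot)$) applied to (8) gives $C\le\log_2\!\big(1+p\,\mathbb{E}\{\mathbf{w}_k^H\mathbf{H}_{\rm E}\mathbf{X}^{-1}\mathbf{H}_{\rm E}^H\mathbf{w}_k\}\big)$. I would evaluate the remaining expectation by first conditioning on the user channels (freezing $\mathbf{V}$ and $\mathbf{w}_k$) and using the orthogonal decomposition $\mathbf{H}_{\rm E}=\mathbf{V}\tilde{\mathbf{H}}_{\rm E}+\mathbf{P}\bar{\mathbf{H}}_{\rm E}$, where $\mathbf{P}$ is an orthonormal basis for the range of $\mathbf{H}$. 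Because $\mathbf{V}^H\mathbf{w}_k=\mathbf{0}$, only the $\mathbf{P}\bar{\mathbf{H}}_{\rm E}$ component contracts with $\mathbf{w}_k$, while $\mathbf{X}/q=\tilde{\mathbf{H}}_{\rm E}^H\tilde{\mathbf{H}}_{\rm E}$ is a complex Wishart of dimension $M_{\rm E}$ with $M-K$ degrees of freedom. The inverse-Wishart moment identity then produces the $(M-K-M_{\rm E})^{-1}$ factor appearing in $I_{\rm E}$; averaging over $\bar{\mathbf{H}}_{\rm E}$ contributes the $\text{tr}(\mathbf{R}_{\rm E}+\beta_3\mathbf{R}_{\rm B,E})$ scaling; and the outer expectation over the user channels, via $\mathbb{E}\{\mathbf{w}_k\mathbf{w}_k^H\}\propto\mathbf{R}_k$, brings in $\text{tr}(\mathbf{R}_k(\mathbf{R}_{\rm E}+\beta_3\mathbf{R}_{\rm B,E}))$ and $\sum_j\text{tr}(\mathbf{R}_j)$, producing $\gamma_{\rm E}=S_{\rm E}/I_{\rm E}$.

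The main obstacle is the Eve-side computation: $\tilde{\mathbf{H}}_{\rm E}$ and $\bar{\mathbf{H}}_{\rm E}$ are jointly Gaussian but are independent only when $\mathbf{R}_{\rm E}+\beta_3\mathbf{R}_{\rm B,E}$ is isotropic. In the general correlated case one must carefully handle the cross-covariance through $\mathbf{V}^H(\mathbf{R}_{\rm E}+\beta_3\mathbf{R}_{\rm B,E})\mathbf{P}$ and exploit the randomness of $\mathbf{V}$ (induced by the user channels) to isotropize the null-space projector $\mathbf{V}\mathbf{V}^H$, so that only the clean trace expressions stated in $S_{\rm E}$ and $I_{\rm E}$ survive. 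Once that reduction is in place, combining the two bounds via $R_{\rm sec}=[R_k-C]^+\ge[\log_2(1+\gamma_k)-\log_2(1+\gamma_{\rm E})]^+$ delivers the claim.
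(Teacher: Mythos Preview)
Your treatment of $R_k$ coincides with the paper's: same distributional observation $\mathbf{h}_k\sim\mathcal{CN}(\mathbf{0},\mathbf{R}_k)$, same MRT normalization, same trace identities for the mean and cross terms, and the AN term vanishes by the null-space construction. One clarification is needed: the stated $I_k$ contains no $\mathrm{tr}(\mathbf{R}_k^2)$ contribution, so the self-variance $\mathrm{var}\{\mathbf{h}_k^H\mathbf{w}_k\}$ is \emph{dropped}, not folded into $S_k$ as your wording might suggest. The paper does this explicitly via a large-$M$ trace lemma, showing $\frac{1}{M^2}\mathrm{var}\{\mathbf{h}_k^H\mathbf{h}_k\}\to 0$; you should make that step explicit.

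On the Eve side your route genuinely differs. You project $\mathbf{H}_{\rm E}$ onto the null/range subspaces of $\mathbf{H}$ and appeal to an exact inverse-Wishart moment for $\tilde{\mathbf{H}}_{\rm E}^H\tilde{\mathbf{H}}_{\rm E}$; this is clean when $\mathbf{R}_{\rm E}+\beta_3\mathbf{R}_{\rm B,E}\propto\mathbf{I}_M$ but, as you correctly flag, stalls under correlation because $\tilde{\mathbf{H}}_{\rm E}$ and $\bar{\mathbf{H}}_{\rm E}$ are coupled through $\mathbf{V}^H(\mathbf{R}_{\rm E}+\beta_3\mathbf{R}_{\rm B,E})\mathbf{P}$. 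The paper bypasses this obstacle by a different mechanism: it eigendecomposes the Eve-side correlation, expresses $\mathbf{X}$ as an eigenvalue-weighted sum of (conditionally) Wishart pieces, and then replaces $\mathbf{X}^{-1}$ by a deterministic scalar multiple of $\mathbf{I}_{M_{\rm E}}$ using the large-dimension asymptotic $\mathbf{A}^{-1}\xrightarrow{\mathrm{a.s.}}\frac{1}{n-m}\mathbf{I}_m$ for $\mathbf{A}\sim\mathcal{W}_m(n,\mathbf{I}_m)$. Both the factor $(M-K-M_{\rm E})^{-1}$ and the trace $\mathrm{tr}(\mathbf{R}_{\rm E}+\beta_3\mathbf{R}_{\rm B,E})$ thus arise from a deterministic-equivalent argument rather than from an exact inverse-Wishart expectation, and the subsequent evaluation of $\mathbb{E}\{\mathbf{w}_k^H\mathbf{H}_{\rm E}\mathbf{H}_{\rm E}^H\mathbf{w}_k\}$ is then a straightforward trace computation. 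Your ``isotropization of $\mathbf{V}\mathbf{V}^H$'' idea could in principle reach the same endpoint, but the paper's asymptotic shortcut is precisely what makes the spatially correlated case tractable without ever having to control the cross-covariance term you identify.
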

\begin{proof}
	See Appendix A. 
\end{proof}

\begin{remark}
	It is observed from Theorem 1 that the ergodic secrecy rate depends only on the statistical CSI of the users and Eve, phase shifts $\bf\Phi$, and power fraction $\xi$, motivating further optimization concerning $\bf \Phi$ and $\xi$ at the BS.
\end{remark}

\begin{corollary}
For uncorrelated Rayleigh fading, i.e., $\mathbf{R}_{{\rm I},k}=\mathbf{R}_{\rm I,E}={\bf I}_N$ and $\mathbf{R}_{{\rm B},k}=\mathbf{R}_{\rm B,E}={\bf I}_M$, we obtain $\gamma_k$ and $\gamma_{\rm E}$ as (15) and (16) at the bottom of the next page. From (15), we observe that the inter-user interference always exists even with an infinite number of BS antennas $M$. This is because the cascaded channels through the RIS for the multiple users are not asymptotically orthogonal due to the common component ${\bf H}_1$. In addition, the RIS's ability to modify the wireless medium is significantly impeded since the secrecy rate becomes independent of the RIS phase shifts $\bf \Phi$ but only dependent on the size of RIS.
\end{corollary}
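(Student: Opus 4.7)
The plan is to perform a direct substitution into the trace expressions of Theorem~1 and then exploit the unit-modulus structure of the phase-shift matrix. Inserting $\mathbf{R}_{{\rm I},k}=\mathbf{R}_{\rm I,E}={\bf I}_N$ and $\mathbf{R}_{{\rm B},k}=\mathbf{R}_{\rm B,E}={\bf I}_M$ into the definitions of ${\bf R}_k$ and ${\bf R}_{\rm E}$ yields ${\bf R}_k=\beta_{2,k}{\bf I}_M+\beta_{{\rm I},k}{\bf H}_1{\bf \Phi}{\bf \Phi}^H{\bf H}_1^H$ and ${\bf R}_{\rm E}=\beta_{\rm I,E}{\bf H}_1{\bf \Phi}{\bf \Phi}^H{\bf H}_1^H$. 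Since ${\bf \Phi}$ is diagonal with unit-modulus entries, ${\bf \Phi}{\bf \Phi}^H={\bf I}_N$, so both matrices collapse to ${\bf R}_k=\beta_{2,k}{\bf I}_M+\beta_{{\rm I},k}{\bf H}_1{\bf H}_1^H$ and ${\bf R}_{\rm E}=\beta_{\rm I,E}{\bf H}_1{\bf H}_1^H$. This single step is the crux: ${\bf \Phi}$ has been purged from every quantity entering (11)--(14), which immediately justifies the second claim that the secrecy rate becomes $\bf\Phi$-independent in the uncorrelated regime.

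Next, I would evaluate the remaining traces explicitly. Using $[{\bf H}_1]_{m,n}=\sqrt{\beta_1}e^{-j2\pi d_{m,n}/\lambda}$, each diagonal entry of ${\bf H}_1{\bf H}_1^H$ equals $\beta_1 N$, so ${\rm tr}({\bf H}_1{\bf H}_1^H)=\beta_1 MN$. Consequently, ${\rm tr}({\bf R}_k)$ becomes a simple linear combination of $M\beta_{2,k}$ and $\beta_1\beta_{{\rm I},k}MN$. The cross-trace ${\rm tr}({\bf R}_k{\bf R}_i)$ expands into four terms, with the only non-trivial piece being $\beta_{{\rm I},k}\beta_{{\rm I},i}{\rm tr}\bigl(({\bf H}_1{\bf H}_1^H)^2\bigr)$; similar expansions handle ${\rm tr}\bigl({\bf R}_k({\bf R}_{\rm E}+\beta_3{\bf R}_{\rm B,E})\bigr)$ and ${\rm tr}({\bf R}_{\rm E}+\beta_3{\bf R}_{\rm B,E})$. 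Substituting these deterministic quantities into (11)--(14) and cancelling common factors between numerator and denominator of $\gamma_k$ and $\gamma_{\rm E}$ delivers the closed forms (15) and (16).

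The asymptotic observation then follows by inspection. The signal power $S_k$ scales as $[{\rm tr}({\bf R}_k)]^2=O(M^2N^2)$, while the cross term in $I_k$ contains $\beta_{{\rm I},k}\beta_{{\rm I},i}{\rm tr}\bigl(({\bf H}_1{\bf H}_1^H)^2\bigr)$, whose growth in $M$ matches that of the corresponding term in $S_k$; hence the ratio $S_k/I_k$ does not diverge as $M\to\infty$, and the residual inter-user interference arises precisely because the common factor ${\bf H}_1$ enters every cascaded link. The main (and only) obstacle I anticipate is the bookkeeping required for ${\rm tr}\bigl(({\bf H}_1{\bf H}_1^H)^2\bigr)$, whose exact value depends on the BS--RIS geometry through the phases $d_{m,n}/\lambda$; it enters (15)--(16) merely as a deterministic constant once the array geometry is fixed, so no conceptual difficulty arises beyond careful algebra.
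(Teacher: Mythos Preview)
Your approach is correct and matches the paper's implicit derivation: the corollary is obtained by direct substitution of the identity correlation matrices into Theorem~1, using ${\bf\Phi}{\bf\Phi}^H={\bf I}_N$ to eliminate the phase shifts, and then evaluating the resulting traces. Two minor remarks: (i) the quantity you write as ${\rm tr}\bigl(({\bf H}_1{\bf H}_1^H)^2\bigr)$ is precisely what the paper denotes $\Vert{\bf H}_1{\bf H}_1^H\Vert_2^2$ (Frobenius norm squared, since ${\bf H}_1{\bf H}_1^H$ is Hermitian), so no further geometric evaluation is needed and your anticipated bookkeeping obstacle does not arise; (ii) for the asymptotic claim it suffices to note that the diagonal of ${\bf H}_1^H{\bf H}_1$ equals $\beta_1 M$, so ${\rm tr}\bigl(({\bf H}_1{\bf H}_1^H)^2\bigr)={\rm tr}\bigl(({\bf H}_1^H{\bf H}_1)^2\bigr)\geq N\beta_1^2M^2$, giving the interference term the same $M^2$ scaling as $S_k$ and confirming that $\gamma_k$ remains bounded as $M\to\infty$.
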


\begin{corollary}
When $N\gg M$, we have ${\bf H}_1{\bf H}_1^H\rightarrow\beta_1N{\bf I}_M$. By substituting $\Vert{\bf H}_1{\bf H}_1^H\Vert_2^2=\beta_1^2N^2M$ into (15) and (16), the secrecy rate is given by (17) at the bottom of this page. We evince that the achievable rate of user $k$ increases logarithmically with the number of BS antennas and the capacity of Eve hardly changes with $M$. This implies that a promising secrecy performance gain is achieved for large $N$.
\end{corollary}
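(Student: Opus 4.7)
The plan is to start from the closed-form expressions for $\gamma_k$ and $\gamma_{\rm E}$ of Corollary~1 (which already specialise Theorem~1 to the uncorrelated Rayleigh case), and then exploit the favourable-propagation property of the deterministic BS--RIS LoS channel when $N\gg M$. All the phase-shift dependence was already eliminated in Corollary~1, so (17) should follow purely from an asymptotic substitution rather than a fresh expectation computation.

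First I would establish the limit ${\bf H}_1{\bf H}_1^H\to\beta_1 N{\bf I}_M$. Writing $[{\bf H}_1{\bf H}_1^H]_{m,m'}=\beta_1\sum_{n=1}^{N}e^{-j2\pi(d_{m,n}-d_{m',n})/\lambda}$, the diagonal entries equal $\beta_1 N$ exactly, while the off-diagonal entries are coherent sums of $N$ unit-modulus phasors whose arguments vary rapidly across $n$ because of the geometry of the BS/RIS arrays. A standard array-theoretic argument in the spirit of [11] shows that these off-diagonal sums are of order $O(\sqrt{N})$ at most, so $\tfrac{1}{\beta_1 N}{\bf H}_1{\bf H}_1^H\to{\bf I}_M$. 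Taking the Frobenius norm of the resulting limit yields the companion identity $\Vert{\bf H}_1{\bf H}_1^H\Vert_2^2={\rm tr}(({\bf H}_1{\bf H}_1^H)^2)\to\beta_1^2 N^2 M$ that the corollary uses.

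Next I would substitute these two identities into (15)--(16). Every linear occurrence of ${\bf H}_1{\bf \Phi}{\bf \Phi}^H{\bf H}_1^H$ collapses to $\beta_1 N{\bf I}_M$, which removes the remaining dependence on ${\bf \Phi}$; every quadratic occurrence is replaced by $\beta_1^2 N^2 M$. After cancelling the common $\beta_1 N$ factors between numerator and denominator of $\gamma_k$ one sees that the signal part scales like $M^2$ while the interference-plus-noise part scales like $M$, giving $\gamma_k=O(M)$ and thus $R_k=O(\log_2 M)$. For Eve, because ${\rm tr}({\bf R}_k{\bf R}_{\rm E})$ and ${\rm tr}({\bf R}_{\rm E})\,{\rm tr}({\bf R}_j)$ both pick up one factor of $\beta_1 N\,M$, the $M$ factors cancel in the ratio $S_{\rm E}/I_{\rm E}$, leaving $\gamma_{\rm E}$ asymptotically independent of $M$. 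Assembling $R_k$ and $C$ into (9) produces (17) and justifies the qualitative claims about the secrecy scaling.

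The main obstacle is the first step: ${\bf H}_1$ is deterministic, so ${\bf H}_1{\bf H}_1^H\to\beta_1 N{\bf I}_M$ is not a probabilistic channel-hardening result but a purely geometric statement about the phase pattern $\{e^{-j2\pi d_{m,n}/\lambda}\}$ and the specific inter-element spacings of the two arrays. Rather than attempt a self-contained proof here, I would invoke the favourable-propagation result from [11] for full-rank LoS channels. Once that is taken on trust the remaining work is a bookkeeping exercise in powers of $M$ and $N$, and the final form (17) drops out without further computation.
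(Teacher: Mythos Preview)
Your proposal is correct and follows the same route as the paper: the corollary in the paper carries no separate proof environment, and its justification is exactly the substitution you describe---assert ${\bf H}_1{\bf H}_1^H\to\beta_1 N{\bf I}_M$ for $N\gg M$, replace $\Vert{\bf H}_1{\bf H}_1^H\Vert_2^2$ by $\beta_1^2 N^2 M$ in (15)--(16), and keep only the dominant $N^2$ terms to obtain (17). Your added discussion of why the off-diagonal entries of ${\bf H}_1{\bf H}_1^H$ vanish (via the geometric favourable-propagation argument and the appeal to [11]) is more careful than the paper, which simply states the limit without justification; otherwise the two arguments coincide.
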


\newcounter{TempEqCnt}
\setcounter{TempEqCnt}{\value{equation}}
\begin{figure*}[hb] 
{\abovedisplayskip=4pt plus 2pt minus 4pt
\abovedisplayshortskip=0pt plus 2pt
\belowdisplayskip=4pt plus 2pt minus 6pt
\belowdisplayshortskip=4pt plus 2pt minus 4pt
	\vspace{-0.5cm}
	\hrulefill
	\begin{small}
	\begin{equation}
	\gamma_k=\frac{\xi P \left(\beta_{{\rm I},k}^2\beta_1^2M^2N^2+\beta_{{\rm I},k}\beta_{2,k}\beta_1M^2N+\beta_{2,k}^2M^2\right)}   {\xi P\sum\nolimits_{i\neq k}\big[\beta_{2,k}\beta_{2,i}M+\left(\beta_{2,k}\beta_{{\rm I},i}+\beta_{{\rm I},k}\beta_{2,i}\right)\beta_1MN+\beta_{{\rm I},k}\beta_{{\rm I},i}\Vert{\bf H}_1{\bf H}_1^H\Vert_2^2\big]+\sigma_k^2\sum\nolimits_{j=1}^{K}\big[\beta_{{\rm I},j}\beta_1MN+\beta_{2,j}M\big]}
	\end{equation}

	\begin{equation}
	\gamma_{\rm E}=\frac{\xi M_{\rm E}(M-K)\left(
	\beta_{{\rm I},k}\beta_{\rm I,E}\Vert{\bf H}_1{\bf H}_1^H\Vert_2^2+\left(\beta_{{\rm I},k}\beta_3+\beta_{2,k}\beta_{\rm I,E}\right)\beta_1MN+
	\beta_{2,k}\beta_3M\right)}  {(1-\xi)(M-K-M_{\rm E})\left(\beta_3+\beta_{\rm I,E}\beta_1N\right)\sum\nolimits_{j=1}^{K}\big[\beta_{{\rm I},j}\beta_1MN+\beta_{2,j}M\big]}
	\end{equation}
	\begin{equation}
	R_{\rm sec}=\left[\log_2\left(1+\frac{M\beta_{{\rm I},k}\beta_1^2}{\sum\nolimits_{i\neq k}\beta_{{\rm I},i}}\right)-
\log_2\left(1+\frac{\xi M_{\rm E}(M-K)
	\beta_{{\rm I},k}}{(1-\xi)(M-K-M_{\rm E})\beta_1^2\sum\nolimits_{j=1}^{K}\beta_{{\rm I},j}}\right)\right]^{+}
	\end{equation}
	\end{small}}

\end{figure*}

\vspace{0.1cm}
\begin{corollary}
Without the existence of RIS, i.e., $\beta_{{\rm I}, k}=0$ and $\beta_{\rm I, E}=0$, the ergodic secrecy rate in (10) reduces to
{\abovedisplayskip=4pt plus 2pt minus 6pt
\abovedisplayshortskip=0pt plus 2pt
\belowdisplayskip=4pt plus 2pt minus 6pt
\belowdisplayshortskip=2pt plus 2pt minus 4pt
\begin{small}
\begin{align}
&R_{\rm sec}=\bigg[\log_2\bigg(1+\frac{\xi \beta_{2,k}^2PM^2/\sum\nolimits_{j=1}^{K}\beta_{2,j}}{\xi P\delta\sum\nolimits_{i\neq k}\beta_{2,i}{\rm tr}(\mathbf{R}_{{\rm B},k}\mathbf{R}_{{\rm B},i})+\sigma_k^2}\bigg)\nonumber\\&-
\log_2\bigg(1+\frac{\xi M_{\rm E}(M-K){\rm tr}\big({\bf R}_{{\rm B},k}{\bf R}_{\rm B,E}\big)}{(1-\xi)M(M-K-M_{\rm E})\sum\nolimits_{j=1}^{K}\beta_{2,j}}\bigg)\bigg]^{+},
\end{align}
\end{small}}where $\delta=\beta_{2,k}/\sum\nolimits_{j=1}^{K}\beta_{2,j}$. Specifically, when the spatial correlation at the BS disappears, the derived $R_{\rm sec}$ in (18) retrieves the result in [14, Theorem 1] as a special case.
\end{corollary}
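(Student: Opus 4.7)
The plan is to treat Corollary 3 as a direct specialization of Theorem 1. Setting $\beta_{{\rm I},k}=0$ and $\beta_{\rm I,E}=0$ immediately removes the cascaded-channel contribution, collapsing the building blocks of the theorem to
\begin{equation*}
{\bf R}_k = \beta_{2,k}\mathbf{R}_{{\rm B},k},\qquad \mathbf{R}_{\rm E} = \mathbf{0},
\end{equation*}
so that ${\bf H}_1$ and ${\bf \Phi}$ disappear from the expressions. The rest of the proof is to substitute these into $S_k$, $I_k$, $S_{\rm E}$, $I_{\rm E}$ in (11)--(14), carefully tracking the standard correlation-matrix normalization ${\rm tr}(\mathbf{R}_{{\rm B},j}) = M$ (and similarly ${\rm tr}(\mathbf{R}_{\rm B,E}) = M$), and then assemble the SINRs $\gamma_k$ and $\gamma_{\rm E}$ into the claimed form (18).

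Concretely, I would compute the four quantities in sequence. First, $S_k = \xi P [\beta_{2,k}{\rm tr}(\mathbf{R}_{{\rm B},k})]^2 = \xi P \beta_{2,k}^2 M^2$, and $\sum_j {\rm tr}({\bf R}_j) = M\sum_j \beta_{2,j}$. Next, $I_k = \xi P \sum_{i\neq k}\beta_{2,k}\beta_{2,i}{\rm tr}(\mathbf{R}_{{\rm B},k}\mathbf{R}_{{\rm B},i}) + \sigma_k^2 M\sum_j \beta_{2,j}$. For Eve, since $\mathbf{R}_{\rm E}=\mathbf{0}$, we get $S_{\rm E} = \xi M M_{\rm E}(M-K)\beta_3\beta_{2,k}{\rm tr}(\mathbf{R}_{{\rm B},k}\mathbf{R}_{\rm B,E})$ and $I_{\rm E} = (1-\xi)(M-K-M_{\rm E})\beta_3 M^2 \sum_j \beta_{2,j}$. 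Forming $\gamma_k = S_k/I_k$ and dividing numerator and denominator by $\sum_j \beta_{2,j}$ isolates the factor $\delta = \beta_{2,k}/\sum_j \beta_{2,j}$; similarly forming $\gamma_{\rm E}=S_{\rm E}/I_{\rm E}$ and cancelling $\beta_3$ recovers exactly the two logarithmic terms of (18).

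For the second claim, I would further set $\mathbf{R}_{{\rm B},k}=\mathbf{R}_{\rm B,E}=\mathbf{I}_M$, which gives ${\rm tr}(\mathbf{R}_{{\rm B},k}\mathbf{R}_{{\rm B},i}) = {\rm tr}(\mathbf{R}_{{\rm B},k}\mathbf{R}_{\rm B,E}) = M$, so that the interference term in $\gamma_k$ reduces to $\xi P M \delta \sum_{i\neq k}\beta_{2,i}$ and $\gamma_{\rm E}$ becomes independent of user-specific correlation structure. Comparing the resulting closed form with the statement of [14, Theorem 1] then verifies the stated equivalence.

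The main obstacle here is essentially bookkeeping rather than any analytical difficulty: every simplification is a direct substitution, and no new inequalities, asymptotic arguments, or limit passages are required beyond those already invoked in Theorem 1. The only subtlety is the consistent use of the trace normalization ${\rm tr}(\mathbf{R}_{{\rm B},j}) = M$, which absorbs a factor of $M$ into $\sum_j {\rm tr}({\bf R}_j)$ and accounts for the $M^2$ that appears in the numerator of $\gamma_k$ in (18).
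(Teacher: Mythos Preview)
Your approach is correct and is exactly what the paper does (implicitly): Corollary 3 is stated without a separate proof, as it follows from Theorem 1 by the direct substitution $\beta_{{\rm I},k}=\beta_{\rm I,E}=0$, which collapses ${\bf R}_k$ to $\beta_{2,k}\mathbf{R}_{{\rm B},k}$ and ${\bf R}_{\rm E}$ to $\mathbf{0}$, followed by using ${\rm tr}(\mathbf{R}_{{\rm B},j})={\rm tr}(\mathbf{R}_{\rm B,E})=M$ and simplifying (11)--(14). There is nothing further in the paper beyond this bookkeeping, and your handling of the second claim via $\mathbf{R}_{{\rm B},k}=\mathbf{R}_{\rm B,E}={\bf I}_M$ is likewise the intended route.
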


\vspace{-0.3cm}
\section{Proposed Design for Secrecy Rate Maximization}
In this section, we study the joint optimization of the power fraction $\xi$ and phase shifts $\bf \Phi$ to maximize the ergodic secrecy rate in (10). Mathematically, the optimization problem is formulated as
{\abovedisplayskip=4pt plus 2pt minus 6pt
\abovedisplayshortskip=0pt plus 2pt
\belowdisplayskip=4pt plus 2pt minus 6pt
\belowdisplayshortskip=0pt plus 2pt minus 4pt
\begin{align}
\rm{(P1)}\ \ &\underset{\xi,\bf \Phi}{\rm max} \ \ R_{\rm sec}(\xi,\bf \Phi)  
\\&{\rm s.t.}\ \ \ \ \xi\in[0,1];\ |\phi_i|=1,\ i=1,...,N,\nonumber
\end{align}}where $\phi_i=\exp(j\theta_i)$. It is challenging to jointly optimize $R_{\rm sec}(\xi,\bf \Phi)$ as it is a non-convex function of $\xi$ and $\bf \Phi$. To address this, the alternating optimization (AO) technique is applied to optimize $\xi$ and $\bf \Phi$ by executing refinement processes with efficient closed-form calculations at the BS.

First, we consider the optimization of $\xi$ by fixing $\bf \Phi$. The following lemma provides a closed-form solution to the fixed-point equation for solving (P1).
\begin{lemma}
	For given $\bf \Phi$, the optimal solution of $\xi$ is
{\abovedisplayskip=4pt plus 2pt minus 6pt
\abovedisplayshortskip=0pt plus 2pt
\belowdisplayskip=4pt plus 2pt minus 6pt
\belowdisplayshortskip=2pt plus 2pt minus 4pt
	\begin{small}
	\begin{equation}
		\xi^*=\frac{-b+\sqrt{b^2-4ac}}{2a},
	\end{equation}
	\end{small}}where $a=B_1(A_1A_2+A_1A_3+A_2^2)-A_1A_3$, $b=2A_3A_1$, and $c=B_1A_3^2-A_1A_3$ are constants with respect to statistical channel spatial correlation matrices.
\end{lemma}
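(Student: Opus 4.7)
The plan is to treat Lemma 1 as a one-dimensional calculus problem: with ${\bf \Phi}$ held fixed, the secrecy rate in (10) becomes a smooth function of $\xi$ on $(0,1)$, whose first-order optimality condition reduces to a quadratic in $\xi$. The stated closed form (20) will then just be the admissible root of this quadratic.

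First, I would absorb every $\xi$-independent statistical quantity into constants so that each SINR becomes a Möbius transformation in $\xi$. Reading off (11)--(14) with ${\bf \Phi}$ fixed, setting $A_1\propto P[{\rm tr}({\bf R}_k)]^2$, $A_2\propto P\sum_{i\neq k}{\rm tr}({\bf R}_k{\bf R}_i)$, $A_3\propto \sigma_k^2\sum_j{\rm tr}({\bf R}_j)$, and folding the Eve-side traces into a single ratio $B_1$ so that the common factor ${\rm tr}({\bf R}_{\rm E}+\beta_3{\bf R}_{\rm B,E})\sum_j{\rm tr}({\bf R}_j)$ cancels between $S_{\rm E}$ and $I_{\rm E}$, one obtains $\gamma_k=\xi A_1/(\xi A_2+A_3)$ and (up to a normalization lumped into $B_1$) $\gamma_{\rm E}=\xi B_1/(1-\xi)$. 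In these variables $R_{\rm sec}(\xi)=\log_2\!\big((1+\gamma_k)/(1+\gamma_{\rm E})\big)$ is continuously differentiable on $(0,1)$.

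Next, I would differentiate. Since $\log_2$ is monotone and preserves stationary points, it suffices to impose $\gamma_k'(\xi)/(1+\gamma_k)=\gamma_{\rm E}'(\xi)/(1+\gamma_{\rm E})$. A direct quotient-rule computation gives $\gamma_k'/(1+\gamma_k)=A_1A_3/\bigl[(\xi A_2+A_3)(\xi(A_1+A_2)+A_3)\bigr]$ and the analogous rational expression $\gamma_{\rm E}'/(1+\gamma_{\rm E})=B_1/\bigl[(1-\xi)(1+\xi(B_1-1))\bigr]$. Cross-multiplying clears all denominators; the apparent cubic-order terms cancel in pairs, and collecting like powers of $\xi$ yields a quadratic $a\xi^2+b\xi+c=0$ whose coefficients, after the grouping $A_1A_3-B_1(A_1A_3+A_1A_2+A_2^2)$ etc., match those stated in the lemma up to an overall sign.

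Finally, I would justify selecting the specific root in (20). A boundary inspection shows $R_{\rm sec}\to 0$ as $\xi\to 0^+$ (because $\gamma_k\to 0$) and $R_{\rm sec}\to-\infty$ as $\xi\to 1^-$ (because $\gamma_{\rm E}\to\infty$), so the continuous function attains its supremum at an interior stationary point. Since $A_1,A_2,A_3,B_1$ are strictly positive, the signs of $a,b,c$ are determined, and Vieta's formulas together with the requirement $\xi^*\in(0,1)$ select the larger root $\xi^*=(-b+\sqrt{b^2-4ac})/(2a)$; a second-derivative check (or monotonicity of $R_{\rm sec}'$ on either side of the root) confirms this is indeed the maximum.

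The main obstacle is the symbolic bookkeeping in the middle step: pushing the cross-multiplication through without sign slips so that the apparent cubic collapses to a genuine quadratic and the grouping $A_1A_2+A_1A_3+A_2^2$ inside $a$ emerges. The subsidiary technical points—positivity of the discriminant $b^2-4ac$ and feasibility of the selected root inside $(0,1)$—then follow routinely from positivity of the statistical constants combined with the boundary behavior already noted.
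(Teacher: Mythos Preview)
Your proposal is correct and follows essentially the same route as the paper: reparameterize $\gamma_k$ and $\gamma_{\rm E}$ as M\"obius functions of $\xi$ via the constants $A_1,A_2,A_3,B_1$, differentiate $R_{\rm sec}$, set the resulting rational expression to zero, and solve the ensuing quadratic. The only cosmetic difference is that the paper asserts $R''_{\rm sec}<0$ directly to obtain uniqueness, whereas you argue via boundary behavior and Vieta's formulas to pick the admissible root; both justifications lead to the same conclusion.
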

\begin{proof}
	 By taking the first derivative of $R_{\rm sec}$ in (10), it yields
	 {\abovedisplayskip=4pt plus 2pt minus 6pt
\abovedisplayshortskip=0pt plus 2pt
\belowdisplayskip=4pt plus 2pt minus 6pt
\belowdisplayshortskip=2pt plus 2pt minus 4pt
	\begin{small}
	\begin{align}
	R'_{\rm sec}=\frac{\partial R_{\rm sec}}{\partial \xi}=&\frac{A_1A_3}{\ln2(A_3+A_2\xi)[A_3+(A_1+A_2)\xi]}\nonumber\\
	&-\frac{B_1}{\ln2(\xi-1)[1+(B_1-1)\xi]},
	\end{align}
	\end{small}}where $B_1\triangleq \frac{MM_{\rm E}(M-K)\zeta^2{\rm tr}\big({\bf R}_k({\bf R}_{\rm E}+\beta_3{\bf R}_{\rm B,E})\big)}{K(M-K-M_{\rm E}){\rm tr}({\bf R}_{\rm E}+\beta_3{\bf R}_{\rm B,E})\sum\nolimits_{j=1}^{K}{\rm tr}({\bf R}_j)}$, $A_1\triangleq P{\rm tr}(\mathbf{R}_k)^2$, $A_2\triangleq P\sum\nolimits_{i\neq k}{\rm tr}(\mathbf{R}_k\mathbf{R}_i)$, and $A_3\triangleq \sigma_k^2\sum\nolimits_{j=1}^{K}{\rm tr}({\bf R}_j)$. Since $\xi\in[0,1]$, after some algebraic manipulations, it is easily checked that $R''_{\rm sec}<0$, which implies that $R'_{\rm sec}$ is a strictly decreasing function on $\xi$. Moreover, we have $R'_{\rm sec}>0$ for small $\xi$, while $R'_{\rm sec}<0$ for large $\xi$. Hence, there exists an optimal choice of $\xi$ achieving the unique maximum of secrecy rate. Therefore, considering the concavity of $R_{\rm sec}$ with respect to $\xi$, the optimal power fraction in (20) is obtained by solving $R'_{\rm sec}=0$.
\end{proof}

Then, we optimize the RIS phase matrix $\bf \Phi$ for fixed $\xi$, which is less tractable due to the unit-modulus constraints. Due to the complicated form of $R_{\rm sec}$ in (10), we apply the projected gradient ascent method to obtain a locally optimal solution, eventually converging to a stationary point [11]. Specifically at the $l$th step, denote by ${\bf v}^l=[\phi_1^l,...,\phi_n^l,...,\phi_N^l]^T$ the induced phases and by ${\bf q}^k$ the adopted ascent direction, where $[{\bf q}^l]_n=\frac{\partial R_{\rm sec}}{\partial \phi_n^*}$ with respect to $\phi_n=e^{j\theta_n}$ is obtained in the following \emph {Lemma 2}. The subsequent $(l+1)$th iteration step is updated according to
{\abovedisplayskip=4pt plus 2pt minus 6pt
	\abovedisplayshortskip=0pt plus 2pt
	\belowdisplayskip=4pt plus 2pt minus 6pt
	\belowdisplayshortskip=2pt plus 2pt minus 4pt
\begin{equation}
	{\tilde{\bf v}}^{l+1}={\bf v}^{l}+\mu_k{\bf q}^l\ {\text {and}}\ {\bf v}^{l+1}=\exp\big(j\arg\big({\tilde{\bf v}}^{l+1}\big)\big),
\end{equation}}where $\mu_k$ is the step size computed at each step. 

\begin{lemma}
	The gradient of the ergodic secrecy rate, $R_{\rm sec}$, with respect to $\phi_n$ is computed as
	{\abovedisplayskip=4pt plus 2pt minus 6pt
\abovedisplayshortskip=0pt plus 2pt
\belowdisplayskip=4pt plus 2pt minus 6pt
\belowdisplayshortskip=2pt plus 2pt minus 4pt
	\begin{small}
	\begin{equation}
		\frac{\partial R_{\rm sec}}{\partial \phi_n^*}=\frac{1}{\ln2}\bigg(\frac{\frac{\partial \gamma_k}{\partial \phi_n^*}}{1+\gamma_k}-\frac{\frac{\partial \gamma_{\rm E}}{\partial \phi_n^*}}{1+\gamma_{\rm E}}\bigg),
	\end{equation}
	\end{small}}where $\frac{\partial \gamma_k}{\partial \phi_n^*}$ and $\frac{\partial \gamma_{\rm E}}{\partial \phi_n^*}$ are given in (28) and (29), respectively.
\end{lemma}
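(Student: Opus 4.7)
The plan is to derive (22) directly by applying the Wirtinger chain rule to $R_{\rm sec}$ as given in Theorem 1, and then to outline the explicit evaluation of the constituent derivatives $\partial\gamma_k/\partial\phi_n^*$ and $\partial\gamma_{\rm E}/\partial\phi_n^*$ that yield (28) and (29). Working in the regime where $[\cdot]^+$ is smooth (non-zero secrecy rate), I have $R_{\rm sec} = \log_2(1+\gamma_k) - \log_2(1+\gamma_{\rm E})$, and since the $\bf \Phi$-dependence enters only through $\gamma_k$ and $\gamma_{\rm E}$, treating $\phi_n$ and $\phi_n^*$ as independent Wirtinger variables and combining $\partial \ln(1+x)/\partial \phi_n^* = (1+x)^{-1}\partial x/\partial \phi_n^*$ with the $1/\ln 2$ factor from $\log_2$ yields (22) term-by-term.

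The substantive work is the two inner derivatives. For each ratio $\gamma=S/I$ formed from (11)--(14), I apply the quotient rule, which reduces the task to differentiating the trace quantities ${\rm tr}({\bf R}_k)$, ${\rm tr}({\bf R}_k{\bf R}_i)$, ${\rm tr}({\bf R}_j)$, ${\rm tr}\big({\bf R}_k({\bf R}_{\rm E}+\beta_3{\bf R}_{\rm B,E})\big)$, and ${\rm tr}({\bf R}_{\rm E}+\beta_3{\bf R}_{\rm B,E})$ with respect to $\phi_n^*$. The single core identity I would establish is that for any matrices $\mathbf{A}, \mathbf{B}$ independent of $\bf \Phi$,
\[
\frac{\partial\,{\rm tr}(\mathbf{A}{\bf \Phi}\mathbf{B}{\bf \Phi}^H)}{\partial \phi_n^*} = \big[\mathbf{A}{\bf \Phi}\mathbf{B}\big]_{nn},
\]
which follows by expanding ${\rm tr}(\mathbf{A}{\bf \Phi}\mathbf{B}{\bf \Phi}^H)=\sum_{i,j}A_{ij}B_{ji}\phi_j\phi_i^*$ from the diagonal structure of $\bf \Phi$ and differentiating only the $i=n$ terms. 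Applied after trace-cyclic rearrangement to ${\bf R}_k = \beta_{2,k}{\bf R}_{{\rm B},k}+\beta_{{\rm I},k}{\bf H}_1{\bf \Phi}{\bf R}_{{\rm I},k}{\bf \Phi}^H{\bf H}_1^H$ (and analogously to ${\bf R}_{\rm E}$), this immediately handles the single-factor traces.

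The main obstacle is the product traces ${\rm tr}({\bf R}_k{\bf R}_i)$ and ${\rm tr}\big({\bf R}_k({\bf R}_{\rm E}+\beta_3{\bf R}_{\rm B,E})\big)$, in which two $\bf \Phi$-dependent factors appear simultaneously. I would split each ${\bf R}$ into its BS-correlation component (constant in $\bf \Phi$) and its cascaded-RIS component (quadratic in $\bf \Phi$), expand the product into four pieces, and apply the product rule. The constant-constant contribution vanishes under differentiation; the two constant-quadratic cross-terms reduce to a single application of the core identity above; and the quartic-in-$\bf \Phi$ contribution splits under the product rule into two terms, each of which---after one more trace-cyclic manipulation absorbing one $\bf \Phi$ factor into the auxiliary matrices---again falls into the $[\mathbf{A}{\bf \Phi}\mathbf{B}]_{nn}$ template. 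Collecting these pieces through the quotient rule for $\gamma_k$ and $\gamma_{\rm E}$ and substituting into (22) produces the explicit closed forms announced as (28) and (29).
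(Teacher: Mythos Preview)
Your proposal is correct and follows essentially the same route as the paper: Appendix~B applies the chain rule to $R_{\rm sec}$, the quotient rule to $\gamma_k=S_k/I_k$ and $\gamma_{\rm E}=S_{\rm E}/I_{\rm E}$ (your (28)--(29)), and then reduces each trace derivative to the $[\mathbf{A}{\bf \Phi}\mathbf{B}]_{nn}$ form, which the paper invokes from \cite{b8} rather than deriving in-line as you do. Your explicit handling of the product traces via splitting and the product rule is exactly what is needed to obtain the stated $I_k'$ and $S_{\rm E}'$, which the paper records without intermediate steps.
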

\begin{proof}
	See Appendix B.
\end{proof}

Now by incorporating \emph{Lemma 1} and the gradient ascent method, concrete steps of the proposed algorithm are summarized in Algorithm~1.

\begin{proposition}
The proposed algorithm always converges to a stationary point of (P1).
\end{proposition}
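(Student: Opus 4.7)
The plan is to establish convergence in three stages: (i) the sequence of objective values $R_{\rm sec}^{(t)}$ produced by successive outer iterations of the algorithm is monotonically non-decreasing; (ii) this sequence is uniformly bounded above, hence convergent; and (iii) every limit point of the iterate sequence satisfies the first-order stationarity conditions of (P1).

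For monotonicity, I would argue the two alternating blocks separately. In the $\xi$-update at iteration $t$, with $\mathbf{\Phi}^{(t)}$ held fixed, Lemma 1 yields the \emph{global} maximizer $\xi^{(t+1)}$ of $\xi \mapsto R_{\rm sec}(\xi,\mathbf{\Phi}^{(t)})$ on $[0,1]$ (the proof of Lemma 1 already establishes strict concavity in $\xi$), so $R_{\rm sec}(\xi^{(t+1)},\mathbf{\Phi}^{(t)}) \geq R_{\rm sec}(\xi^{(t)},\mathbf{\Phi}^{(t)})$. In the $\mathbf{\Phi}$-update, with $\xi^{(t+1)}$ fixed, the projected gradient ascent in (22) uses the Euclidean projection $\exp(j\arg(\cdot))$ onto the torus $\{\phi\in\mathbb{C}^N:|\phi_i|=1\}$. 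Using a step size $\mu_k$ chosen by an Armijo-type backtracking rule (or a constant step size smaller than the reciprocal of the Lipschitz constant of $\nabla_\mathbf{\Phi} R_{\rm sec}$), standard results on projected gradient on smooth manifolds give $R_{\rm sec}(\xi^{(t+1)},\mathbf{\Phi}^{(t+1)}) \geq R_{\rm sec}(\xi^{(t+1)},\mathbf{\Phi}^{(t)})$. Chaining the two inequalities yields $R_{\rm sec}^{(t+1)} \geq R_{\rm sec}^{(t)}$.

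For the upper bound, I would invoke Theorem 1: the quantities $S_k, I_k, S_{\rm E}, I_{\rm E}$ are continuous functions of $(\xi,\mathbf{\Phi})$ on the compact feasible set $[0,1] \times \mathcal{T}^N$, where $\mathcal{T}=\{\phi\in\mathbb{C}:|\phi|=1\}$, and $I_k,I_{\rm E}$ are bounded away from zero whenever the noise/path-loss constants are positive. Hence $R_{\rm sec}$ is continuous and uniformly bounded, so monotonicity plus boundedness gives $R_{\rm sec}^{(t)} \to R_{\rm sec}^\star$. By compactness of the feasible set, the iterate sequence $\{(\xi^{(t)},\mathbf{\Phi}^{(t)})\}$ has a convergent subsequence with limit $(\xi^\star,\mathbf{\Phi}^\star)$. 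By continuity of $R_{\rm sec}$ and its gradients, passing to the limit in the optimality condition $R_{\rm sec}'(\xi^{(t+1)},\mathbf{\Phi}^{(t)})=0$ from Lemma 1 and in the projected-gradient fixed-point condition for $\mathbf{\Phi}$ shows that $(\xi^\star,\mathbf{\Phi}^\star)$ is a stationary point of (P1).

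The main obstacle will be handling the non-convex unit-modulus constraint rigorously in the $\mathbf{\Phi}$-update: the torus manifold makes the projected gradient step a non-convex operation, and the projection $\exp(j\arg(\tilde v_n^{l+1}))$ is only well-defined when $\tilde v_n^{l+1}\neq 0$. I would address this by (a) showing from the closed-form expressions in Theorem 1 that $\nabla_\mathbf{\Phi} R_{\rm sec}$ is Lipschitz continuous on the compact feasible set (using the explicit bilinear dependence on $\mathbf{\Phi}$ exhibited in Lemma 2), which legitimizes the sufficient-ascent inequality underpinning the projected gradient convergence; and (b) noting that the degenerate event $\tilde v_n^{l+1}=0$ is avoided by choosing $\mu_k$ sufficiently small, since $|v_n^{(t)}|=1$. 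Once these technicalities are settled, invoking the standard convergence theorem for projected gradient methods on smooth compact manifolds (see, e.g., [11]) completes the argument.
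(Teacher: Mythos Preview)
Your proposal is correct and follows essentially the same route as the paper: establish the monotone ascent chain $R_{\rm sec}(\xi_{(t)},\mathbf{\Phi}_{(t)})\ge R_{\rm sec}(\xi_{(t)},\mathbf{\Phi}_{(t-1)})\ge R_{\rm sec}(\xi_{(t-1)},\mathbf{\Phi}_{(t-1)})$, using the concavity in $\xi$ (Lemma~1) for one inequality and the ascent property of the projected gradient step for the other. The paper's own argument stops at this monotonicity chain and simply cites [15], whereas your stages~(ii)--(iii) (boundedness on the compact torus, subsequence extraction, and passage to the limit in the first-order conditions) supply the additional details needed to genuinely conclude stationarity; this extra rigor is welcome but does not represent a different approach.
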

\begin{proof}
	This is directly checked by the following
	{\abovedisplayskip=4pt plus 2pt minus 6pt
\abovedisplayshortskip=0pt plus 2pt
\belowdisplayskip=4pt plus 2pt minus 6pt
\belowdisplayshortskip=2pt plus 2pt minus 4pt
	\begin{small}
	\begin{equation}
	R_{\rm sec}\big(\xi_{(t)},{\mathbf \Phi}_{(t)}\big)\overset{\rm (a)}{\geq} R_{\rm sec}\big(\xi_{(t-1)},{\mathbf \Phi}_{(t)}\big)\overset{\rm (b)}{\geq} R_{\rm sec}\big(\xi_{(t-1)},{\mathbf \Phi}_{(t-1)}\big),
	\end{equation}
	\end{small}}where $\rm (a)$ holds since the optimization of $\xi$ is convex for given $\bf\Phi$, and $\rm (b)$ holds because the gradient search is along a monotonically increasing direction of $R_{\rm sec}$ [15].
\end{proof}

\begin{algorithm}[t]
	{\small
	\caption{Proposed algorithm for solving P1}
	\label{alg:4}
	\begin{algorithmic}[1]
	\State {\bf Initialize:} 
	$\mathbf{v}^0=\exp(j\pi/2)\mathbf{1}_N$, $\mathbf{\Phi}^0={\rm diag}(\mathbf{v}^0)$, $R_{\rm sec}^0=f(\xi,\mathbf{\Phi}^0)$ given by (9), $\xi\in[0,1]$, $t=0$, and $\epsilon>0$.\\
	{\bf Repeat}\ \ $t\leftarrow t+1$\\
    Find $\xi_{(t)}$ with fixed ${\bf \Phi}_{(t-1)}$ as per (20);
	\For {$l=0,1,2,...,$}
       \State Find $[\mathbf{q}^l]_n=\frac{\partial R_{\rm sec}}{\partial \phi_n^*}$, $n=1,...,N$, as per (23) and $\mu$ by backtrack line search [15];
	\State $\tilde{\mathbf{v}}^{l+1}=\mathbf{v}^{l}+\mu\mathbf{q}^{l}$;\ \ \  $\mathbf{v}^{l+1}=\exp(j\arg(\tilde{\mathbf{v}}^{l+1}))$;
	\State  $\mathbf{\Phi}^{l+1}={\rm diag}(\mathbf{v}^{l+1})$;\ \ \ $R_{\rm sec}^{l+1}=f(\xi_{(t)},\mathbf{\Phi}^{l+1})$;
	\State Until $|R_{\rm sec}(\xi_{(t)},\mathbf{\Phi}^{l+1})-R_{\rm sec}(\xi_{(t)},\mathbf{\Phi}^l)|<\epsilon$;
	\State $\mathbf{\Phi}_{(t)}=\mathbf{\Phi}^{l+1}$;
	\EndFor\\
	{\bf Until} $|R_{\rm sec}(\xi_{(t)},{\mathbf \Phi}_{(t)})-R_{\rm sec}(\xi_{(t-1)},{\mathbf \Phi}_{(t-1)})|<\epsilon$.
	\end{algorithmic}
	}
\end{algorithm}

The algorithm comes with low computational complexity because it consists of simple matrix operations. In particular, the complexity of Algorithm~1, depending on the computations involved in updating the power faction in (20) and the gradient in (23), is $\mathcal O(MN^2+NM^2)$, which is lower compared with that of [6] under practical settings.

\section{Numerical Results}
In this section, numerical simulations are provided to validate the effectiveness of the proposed methods. The distance-dependent large-scale path loss coefficient is $\beta=C_0(\frac{d}{D_0})^{-\zeta}$, where $C_0=-20$ dB is the path loss at the reference distance $D_0=1$ m, $d$ represents the individual link distance, and $\zeta$ denotes the path loss exponent. The pass-loss exponents for the RIS-aided links are set as 2 and 2.2 while the pass-loss exponent for the direct links is set as 3. The distance between the BS and RIS is set to be 20 m, and all the users and Eve are assumed to be located in a circular regime, whose center is 50 m away from the RIS and 60 m away from the BS, and the radius is 3 m. The spatial correlation matrices at the BS are generated according to [14] as $[{\bf R}(\rho)]_{i,j}=\rho^{|i-j|}$, while the spatial correlation matrices at the RIS are given as in [7]. The RIS element spacing is given by $d_{\rm H}=d_{\rm V}=\lambda/4$. Also, the signal-to-noise ratio (SNR) is defined as $10\log10(P/\sigma_k^2)$. Unless otherwise specified, we also set $\sigma_k^2=\sigma_{\rm E}^2=-60$ dBm, $\rm SNR=5$ dB, $\rho=0.4$, $N=256$, $M=128$, and $K=8$.

Fig. 1 illustrates that the derived analytical results and numerical results match well for varying number of Eve's antennas. We observe that a higher number of Eve's antennas degrades the secrecy rates as expected. For comparison, we also depict the results with ZF precoding. It is shown that MRT outperforms ZF at low SNRs while for high SNR values ZF attains a higher secrecy rate since ZF offers interference-free communication to users in the high SNR regime. In the case of imperfect CSI, the estimated channel is modeled as ${\bf z}_k=\sqrt{1-\tau^2}\hat{\bf{z}}_k+\tau{\bf e}_k$ by representing ${\bf h}_k={\bf R}_k^{\frac{1}{2}}{\bf z}_k$, where $\hat{\bf{z}}_k$ is an imperfect observation of ${\bf z}_k$, ${\bf e}_k$ is the Gaussian noise, and $0<\tau<1$ characterizes the CSI imperfection. We observe that the secrecy performance loss is marginal with estimation error $\tau=0.1$ in the tested cases.

Fig. 2 depicts the secrecy rate versus the power fraction for $M_{\rm E}=4$, where the optimal value for $\xi$ in (20) is marked by black stars. It is shown that $\xi^*$ is decreasing in the number of BS antennas $M$, i.e., more power should be allocated to AN. This is because the correlation between ${\bf h}_k$ and ${\bf H}_{\rm E}$ becomes strong with growing $M$ due to the increasing dimension of ${\bf H}_1$, resulting in potentially more information leakage to Eve. On the other hand, $\xi^*$ is increasing in the RIS size $N$, since the effective degree of freedom of the channels from RIS to users increases with $N$. In this case, it can be useful to allocate less power to AN for improving the secrecy performance.

\begin{figure*}
\begin{minipage}[t]{0.22\textwidth}
\centering
\includegraphics[width=1.7in]{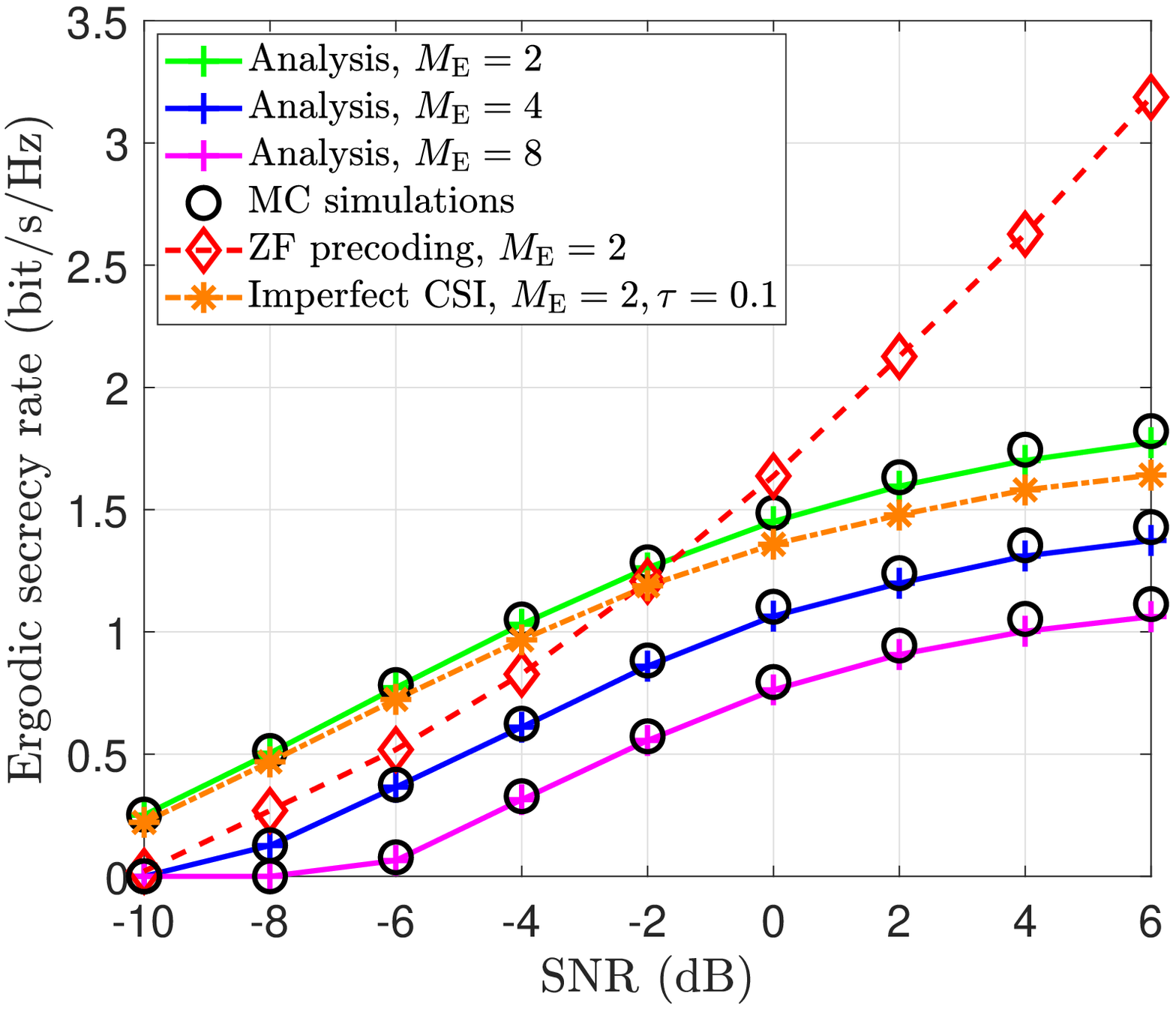}
\caption{Ergodic secrecy rate versus SNR}
\end{minipage}
\hspace{3ex}
\begin{minipage}[t]{0.22\textwidth}
\centering
\includegraphics[width=1.7in]{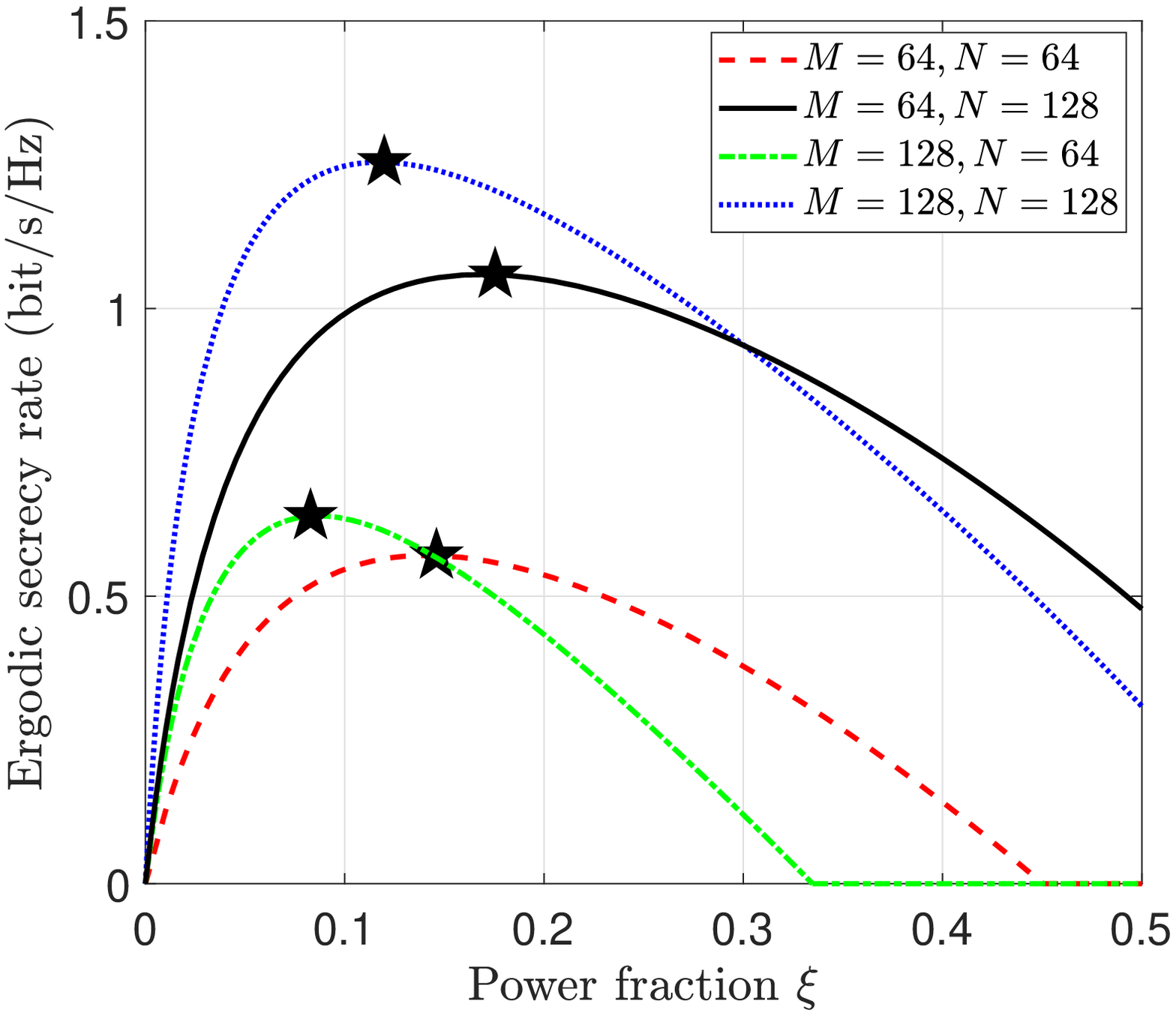}
\caption{Ergodic secrecy rate versus power fraction}
\end{minipage}
\hspace{3ex}
\begin{minipage}[t]{0.22\textwidth}
	\centering
	\includegraphics[width=1.7in]{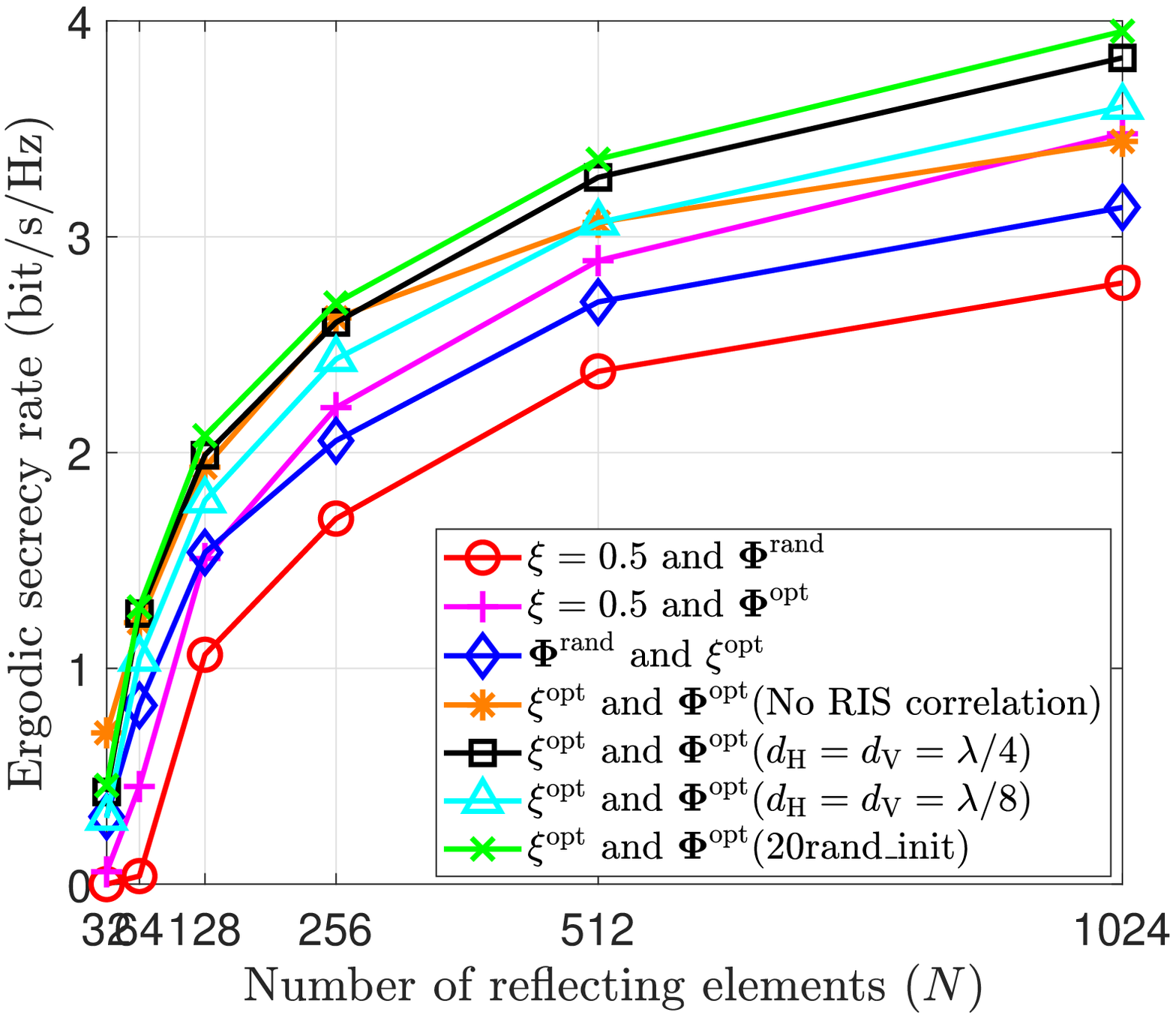}
	\caption{Ergodic secrecy rate versus the number of RIS elements}
\end{minipage}
\hspace{3ex}
\begin{minipage}[t]{0.22\textwidth}
	\centering
	\includegraphics[width=1.7in]{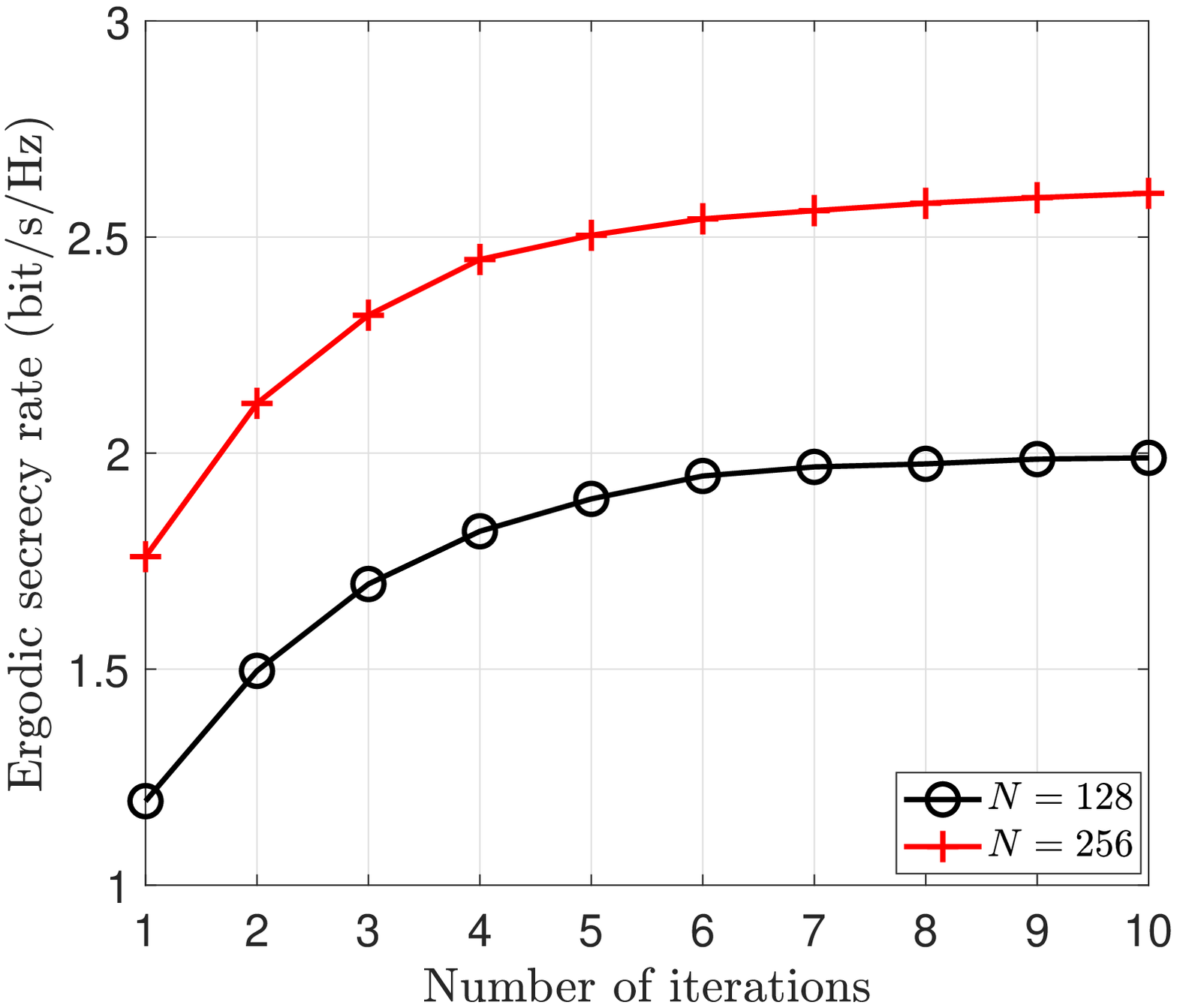}
	\caption{Ergodic secrecy rate versus the number of iterations}
\end{minipage}
\end{figure*}

Fig. 3 presents the secrecy performance of the proposed
scheme versus $N$ for $M_{\rm E}=2$ by using equal power fraction [16] and random phase shifts [7] as benchmarks, i.e., $\xi=0.5$ and $\theta_n\sim\mathcal U[0,2\pi]$. For small $N$, the optimal power fraction with random phase shifts achieves higher secrecy rates than the equal power fraction with random and optimal phase shifts. This is because a small RIS provides limited signal energy boosting for the system where the system tends to be a power-limited scenario and this power fraction optimization plays a dominating role.
In addition, we notice that spatial correlation at the RIS should be taken into account to benefit from the phase shifts design in the case of statistical CSI. Also, it is shown that the secrecy rates decrease when the inter-element spacing reduces from $\lambda/4$ to $\lambda/8$. This is due to an increase in the spatial correlation, which reduces the spatial diversity. We observe a small performance gap by comparing the proposed algorithm with an approximate of the global optimum, which is achieved by running the algorithm twenty times with different initializations and then choosing the best.

Fig. 4 shows the ergodic secrecy rate versus the number of iterations for various numbers of RIS elements with $M_{\rm E}=2$. We observe that the algorithm converges fast in all the tested cases, where the algorithm converges within 10 iterations.

\vspace{-0.3cm}
\section{Conclusion}
We considered the secure communication in RIS-aided multiuser massive MIMO systems. A closed-form expression of the ergodic secrecy rate was derived. Then, based on the expression, we optimized RIS phase shifts and AN power fraction. We showed that a large number of RIS elements and low spatial correlation at the RIS are preferred to achieve high secrecy rates. Future works include extending to Rician and even millimeter-wave channels.

\section*{Appendix A}
For the sake of exposition, we denote the cascade channel of user $k$ by ${\bf h}_k=\sqrt{\beta_{{\rm I},k}}{\bf H}_1{\bf \Phi}\mathbf{R}_{{\rm I},k}^{1/2}{\bf g}_{{\rm I},k}+\sqrt{\beta_{2,k}}\mathbf{R}_{{\rm B},k}^{1/2}{\bf g}_{{\rm B},k}$. Since ${\bf g}_{{\rm I},k}$ and ${\bf g}_{{\rm B},k}$ are independent random vectors, we have that ${\bf h}_k$ follows the complex Gaussian distribution, i.e., ${\bf h}_k\sim\mathcal{CN}(0, {\bf R}_k)$, where ${\bf R}_k=\beta_{2,k}\mathbf{R}_{{\rm B},k}+\beta_{{\rm I},k}{\bf H}_1{\bf \Phi}\mathbf{R}_{{\rm I},k}{\bf \Phi}^H{\bf H}_1^H$.

1) Compute $R_k$: Consider MRT satisfying ${\rm tr}({\bf W}{\bf W}^H)=K$, which leads to ${\bf W}=\sqrt{\frac{K}{{\sum\nolimits_{j=1}^{K}{\rm tr}({\bf R}_j)}}}{\bf H}$.
First, we directly obtain $\left|{\mathbb E}\{{\bf h}_k^H{\bf w}_k\}\right|^2=\frac{K}{{\sum\nolimits_{j=1}^{K}{\rm tr}({\bf R}_j)}}\left[{\rm tr}\left({\bf R}_k\right)\right]^2$ and ${\mathbb E}\left\{\left|{\bf h}_k^H{\bf w}_i\right|^2\right\}=\frac{K}{{\sum\nolimits_{j=1}^{K}{\rm tr}({\bf R}_j)}} {\rm tr}\left({\bf R}_k{\bf R}_i\right)$.
Then, the variance is calculated as
{\abovedisplayskip=4pt plus 2pt minus 6pt
\abovedisplayshortskip=0pt plus 2pt
\belowdisplayskip=4pt plus 2pt minus 6pt
\belowdisplayshortskip=2pt plus 2pt minus 4pt
\begin{small}
\begin{align}
&\frac{1}{M^2}{\rm var}\left\{{\bf h}_k^H{\bf w}_k\right\}\nonumber\\
=&\frac{K}{{\sum\nolimits_{j=1}^{K}{\rm tr}({\bf R}_j)}}\mathbb E\left\{\left|\frac{1}{M}{\bf h}_k^H{\bf h}_k-\frac{1}{M}{\rm tr}({\bf R}_k)\right|^2\right\}\xrightarrow{M\rightarrow\infty} 0,
\end{align}
\end{small}}where (25) is obtained according to [17, Lemma 4]. For the term of ${\mathbb E}\left\{{\bf h}_k^H{\bf V}{\bf V}^H{\bf h}_k\right\}$, it is obviously zero due to the null-space AN method.

2) Compute $C$: To begin with, we rewrite ${\bf X}$ in (8) as ${\bf X}=q{\bf X}_1+q{\bf X}_2$, where ${\bf X}_1\triangleq{\bf H}_{\rm B,E}^H{\bf V}{\bf V}^H{\bf H}_{\rm B,E}$ and ${\bf X}_2\triangleq({\bf H}_1{\bf \Phi}{\bf H}_{\rm I,E})^H{\bf V}{\bf V}^H({\bf H}_1{\bf \Phi}{\bf H}_{\rm I,E})$ are uncorrelated due to the definition in (2). Eigendecompose ${\bf R}_{\rm B, E}={\bf U\Lambda U}^H$ to decorrelate the channel matrix ${\bf H}_{\rm B, E}$ as ${\bf Z}={\bf H}_{\rm B, E}{\bf\Lambda}^{-1/2}{\bf U}^H$, where ${\bf\Lambda}={\rm diag}(\lambda_1,...,\lambda_N)$ contains the eigenvalues of $\bf R$ and the columns of $\bf U$ are the corresponding eigenvectors. Since $\bf U$ is unitary, the statistics of $\bf ZU$ are identical to those of $\bf Z$. Hence, the distribution of ${\bf X}_1$ is the same as
{\abovedisplayskip=4pt plus 4pt minus 6pt
	\abovedisplayshortskip=0pt plus 4pt
	\belowdisplayskip=4pt plus 4pt minus 6pt
	\belowdisplayshortskip=0pt plus 4pt minus 4pt
\begin{small}
\begin{equation}
	\sum\nolimits_{i=1}^{N}\sum\nolimits_{j=1}^{N}\lambda_i^{1/2}\lambda_j^{1/2}{\bf z}_i{\bf v}_i{\bf v}_j^H{\bf z}_j^H,
\end{equation}
\end{small}}where ${\bf z}_i$ is the $i$th row of $\bf Z$ and ${\bf v}_i$ is the $i$th column of $\bf V$. Considering that ${\bf z}_i$ and ${\bf v}_i$ are independent, it is known from [18] that $\sum_{n=1}^{N}\lambda_n{\bf z}_n{\bf v}_n{\bf v}_n^H{\bf z}_n^H$ follows a Wishart distribution, i.e., $\sum_{n=1}^{N}\lambda_n\mathcal{W}_{M_{\rm E}}(M-K,\frac{1}{M}{\bf I}_{M_{\rm E}})$. The distribution of ${\bf X}_2$ is obtained analogously by rewriting ${\bf H}_1{\bf \Phi}{\bf H}_{\rm I,E}=\mathbf{R}_{\rm E}^{1/2}\mathbf{G}_{\rm I,E}$ with $\mathbf{R}_{\rm E}=\beta_{\rm I, E}{\bf H}_1{\bf \Phi}\mathbf{R}_{\rm I, E}{\bf \Phi}^H{\bf H}_1^H$. 
Then, by applying the Jensen's inequality, the capacity of Eve is bounded as\begin{small}
\begin{align}
	C&\le\log_2\big(1+p\mathbb{E}\big\{{\bf w}_k^H{\bf H}_{\rm E}{\bf X}^{-1}{\bf H}_{\rm E}^H{\bf w}_k\big\}\big)\nonumber
	\\&\overset{\rm (a)}{=}\log_2\bigg(1+\frac{\xi M(M-K)\mathbb{E}\big\{{\bf w}_k^H{\bf H}_{\rm E}{\bf H}_{\rm E}^H{\bf w}_k\big\}}{K(1-\xi)(M-K-M_{\rm E}){\rm tr}({\bf R}_{\rm E}+\beta_3{\bf R}_{\rm B,E})}\bigg)\nonumber
	\\&\overset{\rm (b)}{=}\log_2\bigg(1+\frac{\xi \zeta^2MM_{\rm E}(M-K){\rm tr}\big({\bf R}_k({\bf R}_{\rm E}+\beta_3{\bf R}_{\rm B,E})\big)}{K(1-\xi)(M-K-M_{\rm E}){\rm tr}({\bf R}_{\rm E}+\beta_3{\bf R}_{\rm B,E})}\bigg),
\end{align}
\end{small}where $\rm (a)$ uses the property that ${\bf A}^{-1}\xrightarrow{{\rm a.s.}}1/(n-m){\bf I}_m$ for a Wishart matrix ${\bf A}\sim\mathcal{W}_m(n,{\bf I}_m)$ with $n>m$ [18, Sec 2.1.6] and $\sum_{n=1}^{N}\lambda_n={\rm tr}({\bf R}_{\rm E})$, and $\rm (b)$ results form $\mathbb{E}\big\{{\bf w}_k^H{\bf H}_{\rm E}{\bf H}_{\rm E}^H{\bf w}_k\big\}=\zeta^2M_E{\rm tr}\big({\bf R}_k({\bf R}_{\rm E}+\beta_3{\bf R}_{\rm B,E})\big)$ where $\zeta^2=K\big/\big(\sum\nolimits_{j=1}^{K}{\rm tr}({\bf R}_j)\big)$.

\section*{Appendix B}
Using the standard quotient rule of derivatives, we have
\begin{small}
\begin{equation}
	\frac{\partial \gamma_k}{\partial \phi_n^*}=\frac{1}{I_k^2}\left(I_k\frac{\partial S_k}{\partial \phi_n^*}-S_k\frac{\partial I_k}{\partial \phi_n^*}\right),
\end{equation}
\begin{equation}
	\frac{\partial \gamma_{\rm E}}{\partial \phi_n^*}=\frac{1}{I_{\rm E}^2}\left(I_{\rm E}\frac{\partial S_{\rm E}}{\partial \phi_n^*}-S_{\rm E}\frac{\partial I_{\rm E}}{\partial \phi_n^*}\right).
\end{equation}
\end{small}For simplicity, we use the notation $(\cdot)'$ to represent the partial derivative with respect to $\phi_n^*$. Specifically, the term $S_k'$ is given by $S_k'=2\xi P {\rm tr}({\bf R}_k){\rm tr}(\mathbf{R}_k')$, which requires a further derivation of ${\rm tr}(\mathbf{R}_k')$. Since all terms in $\mathbf{R}_k$ depend on $\phi_n^*$, we have
\abovedisplayskip=4pt plus 2pt minus 6pt
\abovedisplayshortskip=0pt plus 2pt
\belowdisplayskip=4pt plus 2pt minus 6pt
\belowdisplayshortskip=2pt plus 2pt minus 4pt
\begin{small}
\begin{align}
&{\rm tr}\big({\bf R}_k'\big)={\rm tr}\left(\beta_{2,k}\frac{\partial \mathbf{R}_{{\rm B},k}}{\partial \phi_n^*}+\beta_{{\rm I},k}\frac{\partial \left({\bf H}_1{\bf \Phi}\mathbf{R}_{{\rm I},k}{\bf \Phi}^H{\bf H}_1^H\right)}{\partial \phi_n^*}
\right)\nonumber\\
\overset{\rm (a)}{=}&\beta_{{\rm I},k}\sum\nolimits_{i,j}\big[{\bf H}_1{\bf \Phi}{\bf R}_{{\rm I},k}\big]_{j,n}[{\bf H}_1^H]^T_{i,n}=\beta_{{\rm I},k}\big[{\bf H}_1^H{\bf H}_1{\bf \Phi}{\bf R}_{{\rm I},k}\big]_{n,n},
\end{align}
\end{small}where $\rm (a)$ is obtained by using Lemma 1 in [8].
To this end, the partial derivatives of $S_k$, $I_k$, $S_{\rm E}$, and $I_{\rm E}$ with respect to $\phi_n^*$ in (28) and (29) are expressed as follows: $S_k'=2\xi P\beta_{{\rm I},k}{\rm tr}({\bf R}_k)\big[{\bf H}_1^H{\bf H}_1{\bf \Phi}{\bf R}_{{\rm I},k}\big]_{n,n}$,
$I_k'=\xi P\sum\nolimits_{i\neq k}\big[{\bf H}_1^H\big(\beta_{{\rm I},k}\mathbf{R}_k{\bf H}_1{\bf \Phi}{\bf R}_{{\rm I},k}+\beta_{{\rm I},i}\mathbf{R}_i{\bf H}_1{\bf \Phi}{\bf R}_{{\rm I},i}\big)\big]_{n,n}+\sigma_k^2\sum\nolimits_{j=1}^{k}\big[{\bf H}_1^H{\bf H}_1{\bf \Phi}{\bf R}_{{\rm I},j}\big]_{n,n}$,
$S_{\rm E}'=\xi MM_{\rm E}(M-K)\big[{\bf H}_1^H\big(\beta_{{\rm I},k}\big({\bf R}_{\rm E}+\beta_3{\bf R}_{\rm B,E}\big){\bf H}_1{\bf \Phi}{\bf R}_{{\rm I},k}+\beta_{{\rm I,E}}\mathbf{R}_k{\bf H}_1{\bf \Phi}{\bf R}_{{\rm I,E}}\big)\big]_{n,n}$, and $I_{\rm E}'
=(1-\xi)(M-K-M_E)\bigg[\beta_{{\rm I,E}}\sum\nolimits_{j=1}^{k}{\rm tr}({\bf R}_j)\big[{\bf H}_1^H{\bf H}_1{\bf \Phi}{\bf R}_{{\rm I,E}}\big]_{n,n}+{\rm tr}({\bf R}_{\rm E}+\beta_3{\bf R}_{\rm B,E})\sum\nolimits_{j=1}^{k}\big[{\bf H}_1^H{\bf H}_1{\bf \Phi}{\bf R}_{{\rm I},j}\big]_{n,n}\bigg]$.

\vspace{-0.3cm}

\end{document}